\titlespacing{\section}{0.5pt}{*0.5}{*0.5}
\titlespacing{\subsection}{0.5pt}{*0.5}{*0.5}
\titlespacing{\subsubsection}{0.5pt}{*0.5}{*0.5}
\newcommand{\removelatexerror}{\let\@latex@error\@gobble}
\newtheoremstyle{plain}
  {\topsep}   
  {\topsep}   
  {\itshape}  
  {0pt}       
  {\bfseries} 
  {.}         
  {5pt plus 1pt minus 1pt} 
  {\thmname{#1}\thmnumber{ #2} \textnormal{(\thmnote{#3})}} 
\newtheorem{theorem}{Theorem}
\newtheorem{defn}{Definition}
\xpatchcmd{\proof}{\hskip\labelsep}{\hskip5\labelsep}{}{}  
\xpatchcmd{\proof}{\@addpunct{.}}{\@addpunct{:}}{}{}
\renewcommand\[{\begin{equation}}
\renewcommand\]{\end{equation}} 
\definecolor{dkgreen}{rgb}{0,0.3,0}
\definecolor{gray}{rgb}{0.5,0.5,0.5}
\begin{document}

\title{Explanation-Guided Fair Federated Learning for Transparent 6G RAN Slicing}
\author{Swastika Roy,~\IEEEmembership{Student Member,~IEEE}, Hatim Chergui,~\IEEEmembership{Senior Member,~IEEE}\\and Christos Verikoukis,~\IEEEmembership{Senior Member,~IEEE}\\
\IEEEcompsocitemizethanks{\IEEEcompsocthanksitem S. Roy is with Iquadrat Informatica S.L. and Technical University of Catalonia (UPC), Barcelona, Spain, H. Chergui is with i2CAT Foundation, Barcelona, Spain and C. Verikoukis is with ATHENA/ISI and University of Patras, Greece. [e-mails: swastika.roy.roy@upc.edu, chergui@ieee.org, cveri@ceid.upatras.gr]}}

\maketitle
\thispagestyle{empty}

\begin{abstract}
Future zero-touch artificial intelligence (AI)-driven 6G network automation requires building trust in the AI black boxes via explainable artificial intelligence (XAI), where it is expected that AI faithfulness would be a quantifiable service-level agreement (SLA) metric along with telecommunications key performance indicators (KPIs). This entails exploiting the XAI outputs to generate transparent and unbiased deep neural networks (DNNs). Motivated by closed-loop (CL) automation and explanation-guided learning (EGL), we design an explanation-guided federated learning (EGFL) scheme to ensure trustworthy predictions by exploiting the model explanation emanating from XAI strategies during the training run time via Jensen-Shannon (JS) divergence. Specifically, we predict per-slice RAN dropped traffic probability to exemplify the proposed concept while respecting fairness goals formulated in terms of the \emph{recall} metric which is included as a constraint in the optimization task. Finally, the \emph{comprehensiveness score} is adopted to measure and validate the faithfulness of the explanations quantitatively.
Simulation results show that the proposed EGFL-JS scheme has achieved more than $50\%$ increase in terms of comprehensiveness compared to different baselines from the literature, especially the variant EGFL-KL that is based on the Kullback-Leibler Divergence. It has also improved the recall score with more than $25\%$ relatively to unconstrained-EGFL.

\end{abstract}

\begin{IEEEkeywords}
6G, fairness, FL, game theory, Jensen-Shannon, proxy-Lagrangian, recall, traffic drop, XAI, zero-touch
\end{IEEEkeywords}
\vspace{0.5cm}
\section{Introduction}
\IEEEPARstart{F}{uture} 6G networks are expected  to be a \emph{machine-centric} technology, wherein all the corresponding \emph{smart things} would operate intelligently yet as smart black boxes \cite{XAI2} that lack transparency in their prediction or decision-making processes and could have adverse effects on the 6G network's operations. In this concern, XAI provides human interpretable methods to adequately explain the AI system and its predictions/decisions and gain the human's \emph{trust} in the loop, which is anticipated to be a quantifiable KPI \cite{confidence} in future 6G networks.
Nonetheless, most of the research on XAI for deep neural networks (DNNs) focuses mainly on generating explanations, e.g., in terms of feature importance or local model approximation in a \emph{post-hoc} way that does not bridge the trade-off that exists between explainability and AI model's performance \cite{tradeoff} nor guarantee the predefined 6G trust KPI requirement. In this regard, there is less contribution on aspects related to the exploitation of the explanations during training, i.e. in an \emph{in-hoc} fashion, which would orient the model towards yielding explainable outputs that fulfill a predefined trust level in AI-driven automated 6G networks, as well as tackling the explainability-performance trade-off \cite{tefl}. This approach is called \emph{explanation-guided learning} (EGL) \cite{salency,guide}, which focuses on addressing the accuracy of prediction and explanation jointly, and their mutual relation in run-time. It is particularly relevant in the context of 6G, where the complexity of the network and the diversity of the applications being served make it essential to guarantee the target trust in AI-driven zero-touch management. 
More specifically, recent advancements in network virtualization and software technology have given rise to network slicing \cite{ns1}, enabling the creation of virtual networks within a single physical network, with AI automating their management. In the realm of 6G, autonomous management of end-to-end network resources is favored over isolated slices due to efficiency and cost considerations. This approach is supported by the ETSI standardized zero-touch network and service management (ZSM) framework \cite{ZSM}. AI algorithms work with distributed datasets to fully leverage network slicing, necessitating decentralized learning, such as Federated Learning (FL). 
Like, Authors in \cite{7} highlight the importance of monitoring network slice performance for 5G's ZSM. They propose integrating FL into their ZSM framework to efficiently manage distributed network slices while ensuring data security, aligning with the broader goal of upholding trust in AI-driven zero-touch management.
In conclusion, ensuring transparency and trust in AI models is a technical necessity and a critical enabler for their integration into telecommunications networks, particularly in the 6G networks. These networks pose unique challenges, and addressing them is essential to ensure that AI is a reliable and valuable tool in the complex and highly regulated telecommunications environment.
So, this paper aims to present a distinct approach for improving federated learning models' interpretability by introducing a new closed-loop training procedure that natively optimizes explainability and fairness goals along with the main use-case task of traffic drop probability prediction for 6G network slices.
\vspace{0.5cm}

\section{Related Works}

In this section, we exhibit the state of the art of EGL to recognize interdisciplinary open research opportunities in this domain.
Indeed, the works on EGL techniques are still in the early stage and facing technical challenges in developing their frameworks \cite{EGL}. According to current studies, incorporating an additional explanation objective could potentially enhance the usefulness of DNNs in various application domains, including computer vision (CV) and natural language processing (NLP). However, its potential benefits for future 6G networks have not been thoroughly investigated yet. Despite this limited exploration, some researchers have started implementing XAI techniques to address telecommunication issues in this field. In particular, the authors of \cite{7,ex1} stress the importance of explainability and identify challenges associated with developing XAI methods in this domain. Additionally, the utilization of XAI for identifying the actual cause of Service Level Agreement (SLA) violations has been studied in \cite{ex2}.
Meanwhile, in \cite{XAI, neuro}, the authors mention the necessity of XAI in the zero-touch service management (ZSM) autonomous systems and present a neuro-symbolic XAI twin framework for zero-touch management of Internet of Everything (IoE) services in the wireless network. Moreover, \cite{FL_xai2} emphasizes the utilization of the Federated Learning (FL) approach with XAI in 5G/6G networks. In \cite{EGL}, two types of EGL termed global guidance and local guidance explanation are mentioned, where both types refine the model's overall decision-making process. Another line of work \cite{114} introduces the EGL framework that allows the user to correct incorrect instances by creating simple rules capturing the target model's prediction and retraining the model on that knowledge. While the authors of \cite{122} define a very generic explanation-guided learning loss called \emph{Right for the Right Reasons} loss (RRR). In addition, an approach to optimize model's prediction and explanation jointly by enforcing whole graph regularization and weak supervision on model explanation is presented in \cite{gens}.   
Additionally, a novel training approach for existing few-shot classification models is presented in \cite{Sun}, which uses explanation scores from intermediate feature maps. Specifically, it leverages an adapted layer-wise relevance propagation (LRP) method and a model-agnostic explanation-guided training strategy to improve model's generalization.
So it is apparent that none of the aforementioned works has considered EGL in telecommunication domains for prediction and decision-making. 

Many research papers explore decision-making processes and performance in diverse application domains. For instance, in a study by the authors \cite{pnr}, the focus is on employing mathematical optimization and iterative stochastic recovery algorithms (ISR) for network recovery when faced with large-scale failures and uncertain damage knowledge. While this approach is instrumental for network recovery, it may fall short in terms of interpretability, posing challenges for network operators who seek to comprehend and place trust in the rationale behind recovery decisions.Another paper of \cite{pfcc} centers on mobile blockchain networks and introduces a novel scheme utilizing federated semi-supervised learning. This scheme predicts link outage scale and efficiently restores connections between 5G network slices, resulting in improved consensus convergence. However, it does not directly address the aspects of explainability or transparency typically associated with AI.
On the other hand, AI-native zero-touch network slicing automation has been introduced to govern the distributed nature of datasets \cite{6} of various technological domains. Thus, a decentralized learning approach like federated learning (FL) \cite{7} has gained attraction for fulfilling the beyond 5G network requirement.
In this context, the authors of \cite{SFL} proposed a method for dynamic resource allocation in decentralized RAN slicing using statistical FL. This approach enables effective resource allocation, SLA enforcement, and reduced communication overhead. 
Similarly, the authors of \cite{dynamic} present a dynamic RAN slicing framework based on two-layer constrained Reinforcement Learning (RL) algorithm for vehicular networks, aiming to support multiple  Internet of vehicles (IoV) services and balance workloads among base stations (BSs). The objective is to minimize long-term system costs while satisfying coupled constraints and resource capacity constraints. While, the authors of \cite{sl} presents a novel SL(Split Learning) approach which parallelizes training by partitioning devices into clusters, reducing training latency. Additionally, a resource management algorithm is proposed to minimize CPSL's training latency, accounting for device variations and network dynamics in wireless networks.

However, the lack of a  clear understanding of the underlying mechanisms in the paper may raise concerns about the trust and transparency of the proposed model.
Furthermore, inspired by the work of \cite{farhad}, where the authors addressed a similar challenge, we build upon their foundation and draw inspiration from their eXplainable Reinforcement Learning (XRL) scheme. This approach seamlessly integrates eXplainable AI (XAI) with an explanation-guided strategy to enhance interpretability, with a focus on intrinsic interpretability to encourage Deep Reinforcement Learning (DRL) agents to learn optimal actions for specific network slice states.
In another facet of our work detailed in \cite{mine2}, we present a novel closed-loop eXplainable Federated Learning (EFL) approach. Our objective is to achieve transparent zero-touch service management of 6G network slices at RAN in a non-IID setup. In this study, we jointly consider explainability and sensitivity metrics as constraints during the Federated Learning (FL) training phase, specifically in the context of the traffic drop prediction task.
However, when comparing our current proposed work with the previously mentioned papers, we
recognize that although explainability has been introduced during the learning process for AI predictions or decision-making purposes, there remains to be a risk of bias or divergence from actual results in both approaches.


From this state-of-the-art (SoA) review, it turns out that incorporating FL and EGL strategies jointly can embrace into the network slicing of 6G RAN to address the absence of a trustworthy and interpretable solution in the prediction and decision-making process during learning. 
In this paper, we present the following contributions:
\begin{itemize}
\item To achieve transparent zero-touch service management of 6G network slices at RAN in a non-IID setup, a novel iterative explanation-guided federated learning approach is proposed, which leverages Jensen-Shannon divergence.

\item A constrained traffic drop prediction model and integrated-gradient based \emph{explainer} exchange attributions of the features and predictions in a closed loop way to predict per-slice RAN dropped traffic probability while respecting fairness constraints formulated in terms of recall. By demonstrating the traffic drop probability distribution and correlation heatmaps, we showcase the guided XAI's potential to support decision-making tasks in the telecom industry.
    
\item We frame the abovementioned fair sensitivity-aware constrained FL optimization problem under the \emph{proxy-Lagrangian} framework and solve it via a non-zero sum two-player game strategy \cite{{TwoPlayer}} and we show the comparison  with state-of-the-art solutions, especially the EGFL-KL. Further, to assess the effectiveness of the proposed one, we report the recall and comprehensiveness performance, which provides a quantitative measure of the sensitivity and EGFL explanation level.

\item More specifically, during the optimization process, we introduce a new federated EGL loss function that encompasses Jensen Shannon divergence to enhance model reliability and we also provide the corresponding results. We carefully scrutinize the normalized loss over the no. of FL rounds of the proposed EGFL model in the optimization process to support this phenomenon.
    
\end{itemize}


\section{Explanation Guided Learning Concept}
As mentioned before, the leading goal of EGL is to enhance both the model performance as well as interpretability by jointly optimizing model prediction as well as the explanation during the learning process. So, the main objective function of Explanation-Guided Learning is defined in the existing research work of \cite {EGL,50}, which is as follows: 

\begin{equation}
\begin{split}
    &\min\, \underbrace{\mathcal{L}_{Pred}\left(\mathcal{F}\left(\mathcal{X}\right), \mathcal{Y}\right)}_{prediction}+ 
    \underbrace{ \alpha\mathcal{L}_{Exp}\left({g}\left(\mathcal{F},\left(\mathcal{X,Y}\right)\right), \hat{M}\right) }_{explanation} \\&+ \underbrace{\beta\ohm\left({g}\left(\mathcal{F},\left(\mathcal{X,Y}\right)\right)\right)}_{regularization},
\end{split} 
\end{equation}
From equation (1), we can observe three terms in the objective function of EGL. Where the first term defines the prediction loss of the model, the second term deals with the supervision of model explanation considering some explicit knowledge, and the last term can introduce some properties for achieving the right reason. These three terms can be specified and enforced differently, pivoting on each explanation-guided learning method \cite{EGL}. However, it is noteworthy that we will adopt gradient based methods to explain the model's prediction in our proposed work. It generates attributions by computing the gradient of the model's output for its inputs, which indicates the input feature contributions to the model's final prediction. Theoretically, the gradient values of all unimportant features should be close to zero, which may not hold in some complex situations, giving biased results in the model decision that is unexpected in the actual scenario \cite{salency}. 
So, even if we introduce explainability during the learning process of any AI system for some prediction or decision-making purpose, there is still a chance we will get a final decision that is biased or different from actual results. Additionally, this may hinder telecommunication service providers/operators from making their system transparent, only considering XAI.
Moreover, other researchers have also addressed the challenge of achieving a trustworthy and interpretable solution in the learning process, similar to our approach. For instance, in \cite{traffic}, the authors tackled this issue and proposed an eXplainable Reinforcement Learning (XRL) scheme. Their method combines eXplainable AI (XAI) with an explanation-guided approach to enhance interpretability, focusing on intrinsic interpretability to encourage Deep Reinforcement Learning (DRL) agents to learn optimal actions for specific network slice states.

In our work detailed in \cite{mine}, inspired by \cite{traffic}, we presented a novel closed-loop eXplainable Federated Learning (EFL) approach. Our goal was transparent zero-touch service management of 6G network slices at RAN in a non-IID setup. In this study, we jointly considered explainability and sensitivity metrics as constraints during the Federated Learning (FL) training phase, specifically in the context of the traffic drop prediction task.

However, comparing our current proposed work with the mentioned papers, we acknowledge that despite introducing explainability during the learning process for AI predictions, there remains a risk of bias or divergence from actual results in both approaches. The issue arises from gradient-based eXplainable AI (XAI) methods, where expected near-zero gradients for unimportant features may not hold in complex scenarios, leading to biased model decisions.
Considering the above fact and inspired by EGL, we came out with a solution to overcome this situation. We aim to introduce EGL to our proposed framework of 6G RAN network slicing during learning, ensuring that the irrelevant gradient will close to zero without sacrificing the model performance.

\begin{table}[t]

\centering	
\newcolumntype{M}[1]{>{\centering\arraybackslash}m{#1}}

\caption{\textcolor{black}{Notations}}
{\color{black}\begin{tabular}{m{2cm}M{6cm}}
\hline
\hline
\multicolumn{1}{>{\centering\arraybackslash}M{2cm}}{Notation} & \multicolumn{1}{>{\centering\arraybackslash}M{6cm}}{Description}\\
\hline
$S_\mu(\cdot)$ & Logistic function with steepness $\mu$\\
$L$ & Number of local epochs\\
$T$ & Number of FL rounds\\
$\mathcal{D}_{k,n}$ & Dataset at CL $(k,n)$\\
$D_{k,n}$ & Dataset size\\
$\ell(\cdot)$ & Loss function\\
$\mathbf{W}_{k,n}^{(t)}$ &  Local weights of CL $(k,n)$ at round $t$\\
$\mathbf{x}_{k,n}^{(i)}$ & Input features\\
$\mathbf{y}_{k,n}^{(i)}$ & Actual Output \\
$\hat{y}_{k,n}^{(i)}$ & Predicted Output \\
$\gamma_{n}$ & Recall lower-bound for slice $n$\\
$\tilde{x}_{k,n}^{(i)}$ & Masked input features \\
$\tilde{p}_{k,n}^{(i)}$ & Masked predictions \\
$\mathcal{D}_{k,n}$ & Samples whose prediction fulfills the SLA \\
$\pi_{k,n}^{(i,j)}$ & Probability Distribution\\
$\rho_{k,n}$ & Recall Score\\
$\alpha_{k,n}^{(i,j)}$ & Weighted attribution of features \\
$\mathbf{JS}_{k,n}$ & Jensen Shannon Divergence \\
$\lambda_{(\cdot)}$ & Lagrange multipliers\\
$R_\lambda$ & Lagrange multiplier radius\\
$\mathcal{L}_{(\cdot)}$ & Lagrangian with respect to ($\cdot$)\\
\hline
\hline
\end{tabular}}
\label{notation}
\end{table}

\section{RAN Architecture and Datasets}
At first, we summarize the notations used throughout the paper in Table \ref{notation}.

As shown in Fig. \ref{network}, we consider a radio access network (RAN), which is composed of a set of $K$ the base station (BSs), wherein a set of $N$ parallel slices are deployed, driven by the concept of slicing-enabled mobile networks \cite{WirelessNS}. The deployment of these parallel slices within the RAN enables efficient resource allocation, differentiation of services, and flexibility to meet diverse user and application needs.

\begin{figure}[t]
\centering
 \includegraphics[width=0.60\textwidth]{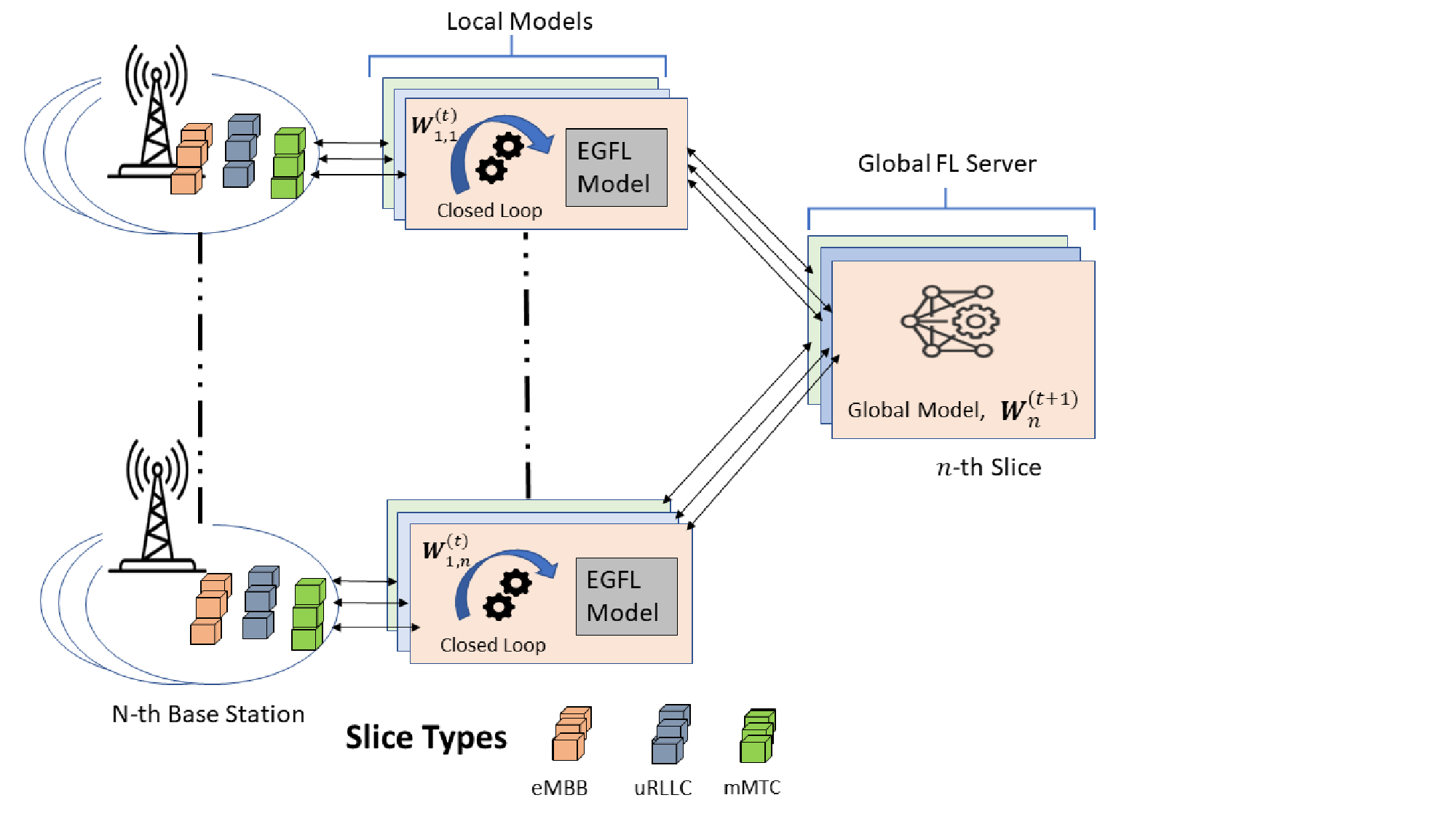}
\caption{Federated traffic drop probability prediction in 6G RAN NS.}
\vspace{-3mm}
\label{network}
\end{figure}
\begin{table}[t]
\label{Datasets_tab}
\centering	
\caption{Dataset Features and Output}
\color{black}\begin{tabular}{ll}
\hline
\hline 
Feature & Description\\
\hline
\texttt{Averege PRB} & Average Physical Resource Block\\ 
\texttt{Latency} &Average transmission latency\\
\texttt{Channel Quality} & SNR value expressing the wireless channel quality\\
\hline 
\hline 
& \\
\hline
\hline 
Output & Description\\
\hline
\texttt{Dropped Traffic} & Probability of dropped traffics (\%)\\
\hline
\hline
\label{Datasets_tab1}
\end{tabular}
\vspace{-2mm}
\label{feature}
\end{table}
\begin{figure*}[t!]
\centering
    \includegraphics[width=1.0\textwidth,scale=4, trim={0cm 0 0cm 0cm},clip]{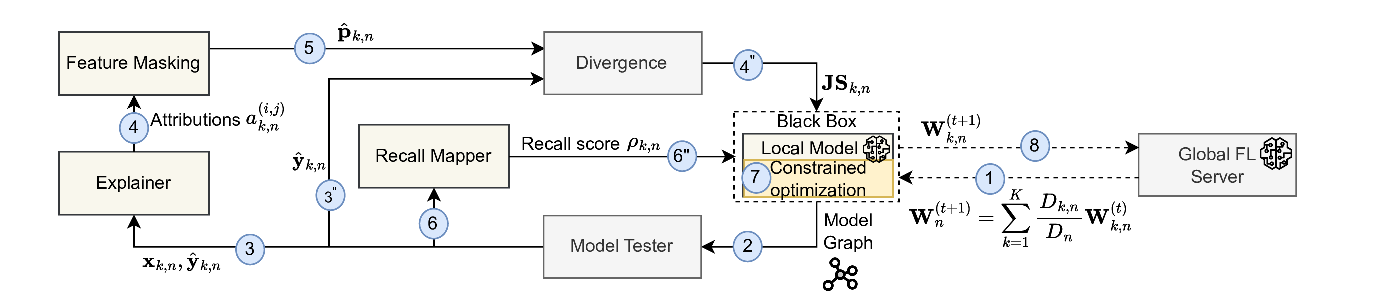}   
\caption{Explanation Guided FL building blocks}
\label{fig: egl}
\label{block}
\end{figure*}

Moreover, this solution's design closely adheres to the O-RAN framework \cite{ORAN}, a global initiative aimed at achieving increased openness in next-generation virtualized radio access networks (vRANs). Because, the  corresponding dataset employed in our study is derived from an O-RAN simulator, with simulations conducted within the framework detailed in the referenced paper \cite{traffic}.  To validate their proposed framework under realistic conditions, the authors considered the city of Milan, Italy, as their study scenario. They collect city-wide RAN deployment information, incorporating data from over 50 BSs from publicly available sources, and simulate realistic human mobility patterns based on prior work \cite{millan}. So, that O-RAN simulations involve the generation of diverse slice traffic profiles using heterogeneous Poisson distributions with time-varying intensities. Specifically, the profiles aim to accurately capture the unique behavior of each network slice, with a primary focus on analyzing dropped traffic that did not meet latency requirements due to incorrect Physical Resource Block (PRB) allocation decisions.
Here, each BS runs a local control closed-loop (CL) which collects monitoring data and performs traffic drop prediction. The local control CL system proactively manages network resources by gathering real-time data on network conditions and using predictive algorithms to estimate the likelihood of traffic drops or disruptions. Specifically, the collected data serves to build local datasets for slice $n \,(n=1,\ldots,N)$, i.e., $\mathcal{D}_{k,n}=\{\mathbf{x}_{k,n}^{(i)},y_{k,n}^{(i)}\}_{i=1}^{D_{k,n}}$, where $\mathbf{x}_{k,n}^{(i)}$ stands for the input features vector while $y_{k,n}^{(i)}$ represents the corresponding output. In this respect, Table \ref{feature} summarizes the features and the output of the local datasets, where each metric is averaged over a measurement period of $1$ second. These accumulated datasets are non-IID due to the different traffic profiles induced by the heterogeneous users' distribution and channel conditions.  Moreover, since the collected datasets are generally non-exhaustive to train accurate anomaly detection classifiers, the local CLs take part in a federated learning task wherein an E2E slice-level federation layer plays the role of a model aggregator. In more specifics, each slice within the BS represents a diverse use case of vertical applications and is characterized by input features such as PRBs, channel quality, and latency. By leveraging these input features, the FL optimization process can train a predictive model guided by XAI that learns to predict the occurrence of traffic drops for each slice within the BS. This enables proactive measures to be taken to mitigate potential disruptions and optimize the performance of the communication system.

\algblock{ParFor}{EndParFor}
\algnewcommand\algorithmicparfor{\textbf{parallel for}}
\algnewcommand\algorithmicpardo{\textbf{do}}
\algnewcommand\algorithmicendparfor{\textbf{end parallel for}}
\algrenewtext{ParFor}[1]{\algorithmicparfor\ #1\ \algorithmicpardo}
\algrenewtext{EndParFor}{\algorithmicendparfor}

\begin{algorithm}[t!]
\caption{Explanation Guided Federated Deep Learning}
\footnotesize
\SetAlgoLined
\KwIn{$K$, $m$, $\eta_{\lambda}$, $T$, $L$. \texttt{\# See Table II}}
\texttt{ \# Step 1: The algorithm starts with the initialization of model parameters $\mathbf{W}_n^{(0)}$ on the server, which is then broadcasted to the CLs} \\

Server initializes $\mathbf{W}_n^{(0)}$ and broadcasts it to the $\mathrm{CL}$s\\

\texttt{ \# The algorithm iterates for a specified number of rounds denoted by $T$.} \\
\For{$t=0,\ldots,T-1$}{
\texttt{\# Within each round, there is a parallelized loop over the CLs indexed by $k$.} \\
\texttt{\# For each CL, the algorithm initializes a local model $\mathbf{W}_{k,n,0}$ \\ 
and constructs a matrix $\mathbf{A}^{(0)}$.}

\textbf{parallel for} $k=1,\ldots,K$ \textbf{do}
Initialize $M=$ \texttt{num\_constraints} and $\mathbf{W}_{k,n,0}=\mathbf{W}_{n}^{(t)}$\\ 
Initialize $\mathbf{A}^{(0)}\in \mathbb{R}^{(M+1) \times (M+1)}$ with $\mathbf{A}_{m',m}^{(0)}=1/(M+1)$\\
\For {$l=0,\ldots,L-1$}{
\texttt{\# Step 2: Model graph from the local Model} \\
Receive the graph $\mathcal{M}_{k,n}$ from the local model \\
\texttt{\# Test the local model and calculate the attributions}\\
$a_{k,n}^{i,j}=$ \texttt{Int. Gradient} 
$\left(\mathcal{M}_{k,n}\left(\mathbf{W}_{k,n,l}, \mathbf{x}_{k,n}\right)\right)$\\ 

\texttt{\# Step 4: attributions are calculated using the Integrated Gradients method}

\texttt{\# Step 5: Mask the lowest dataset based on the attributions with zero padding,} $\tilde{\mathbf{x}}_{k,n}^{(i)}$ \\
\texttt{\# Step 3": Calculate the new model's predictions,} $\tilde{p}_{k,n}^{(i)}$\\
\texttt{\# Step 4": Calculate the Jensen-Shannon divergence,} $\mathbf{JS}(\hat{y}_{k,n}^{(i)}||\tilde{p}_{k,n}^{(i)})$\\
\texttt{\# Step 6-6": Calculate the recall metric}\\
$\rho_{k,n}= \pi^{+}\left(\mathcal{D}_{k,n}\left[\hat{y}_{k,n}^{(i)}=1\right]\right)$\\   
Let $\lambda^{(l)}$ be the top eigenvector of $\mathbf{A}^{(l)}$\\
\texttt{\# Step 7: Solve problem (\ref{OPT1}) via oracle optimization}\\
Let $\hat{\mathbf{W}}_{k,n,l}=\mathcal{O}_\delta\left(\mathcal{L}_{\mathbf{W}_{k,n,l}}(\cdot, \hat{\lambda}^{(l)})\right)$\\
Let $\Delta_{\lambda}^{(l)}$ be a gradient of $\mathcal{L}_{\lambda}(\hat{\mathbf{W}}_{k,n,l}, \lambda^{(l)})$ w.r.t. $\lambda$\\
\texttt{\# Exponentiated gradient ascent}\\
Update $\tilde{\mathbf{A}}^{(l+1)}=\mathbf{A}^{(l)}\odot\cdot\exp{\eta_{\lambda}\Delta_{\lambda}^{(l)}(\lambda^{(l)})}$\\
\texttt{\# Column-wise normalization}\\
$\mathbf{A}_{m}^{(l+1)}=\tilde{\mathbf{A}}_{m}^{(l+1)}/\norm{\mathbf{A}_{m}^{(l+1)}}_1,\,m=1,\ldots,M+1$\\
}
\texttt{\# Step 8: The final model for each CL is obtained by averaging  \\ the models from all local iterations: $\hat{\mathbf{W}}_{k,n}^{(t)}$.}\\

\Return{$\hat{\mathbf{W}}_{k,n}^{(t)}=\frac{1}{L^{\star}}\sum_{l=0}^{L-1}\hat{\mathbf{W}}_{k,n,l}$}\\
Each local CL $(k,n)$ sends $\hat{\mathbf{W}}_{k,n}^{(t)}$ to the server.\\ 
\textbf{end parallel for}\\
\texttt{\# The server aggregates these models to update its global model} \\
\Return{$\mathbf{W}_{n}^{(t+1)}=\sum_{k=1}^K \frac{D_{k,n}}{D_n}\hat{\mathbf{W}}_{k,n}^{(t)}$}\\
and broadcasts the value to all local CLs.}
\end{algorithm}

\section{Explanation-Guided FL for Transparent Traffic Drop Prediction}
In this section, we describe the different stages of the explanation-guided fair FL as summarized in Fig. \ref{block}.

\subsection{Closed-Loop Description}
In our proposed architecture, Fig. \ref{block} represents a block that is placed within the local models shown in Fig. \ref{network}. This block is designed to facilitate proposed explainable guided federated learning, which involves iterative local learning with runtime explanation in a closed-loop manner. We design a simple neural network FL model. For each local epoch, the Learner module feeds the posterior symbolic model graph to the Tester block which yields the test features and the corresponding predictions $\hat{y}_{k,n}^{(i)}$ to the Explainer. The latter first generates the features attributions using  integrated gradients XAI method.

The \emph{Feature Masking} block then selects, for each sample, the feature with lowest attribution and masks it. The corresponding masked inputs, $\tilde{x}_{k,n}^{(i)}, (i = 1,\ldots, D_{k,n}) $ are afterward used to calculate the masked predictions, $\tilde{p}_{k,n}^{(i)} , (i = 1,\ldots, D_{k,n})$ that is fed back to the Divergence module at step 5. At the same time, predictions $\hat{y}_{k,n}^{(i)}$ from the tester block are fed in the Divergence block as shown in the step 3". Divergence module is the responsible to calculate Jensen-Shannon divergence between model predictions of original and masked inputs, $\mathbf{JS}(\hat{y}_{k,n}^{(i)} ||\tilde{p}_{k,n}^{(i)})$.
The corresponding result, denoted as $\mathbf{JS}_{k,n}$ from this module is included with cross entropy loss as the objective function of the local model, shown at the stage 4". Similarly, the \emph{Recall Mapper} calculates the recall score $\rho_{k,n}$ based on the predicated and true positive values at stage 6 and 6" to include it in the local constrained optimization in step 7. 
Our focus lies in addressing imbalanced datasets within our traffic drop prediction problem, where accurately identifying positive classes assumes critical significance. The imbalance refers to the unequal distribution of classes in the dataset. Specifically, in traffic drop prediction, there are often fewer instances of traffic drop events compared to normal traffic. By incorporating recall constraints into our EGFL framework, the model is able to effectively capture positive instances, even those belonging to the minority class.
Indeed, the authors of \cite{oran-dataset} employed metrics such as recall and precision to gauge the performance of an AI service at the network edge in the O-RAN dataset.This reinforces our decision to prioritize "recall" as a constraint in our optimization problem. So, The rationale behind selecting recall as a crucial metric lies in its direct relevance to addressing fairness challenges, particularly in imbalanced datasets.
The recall, also known as sensitivity or true positive rate, represents the ability of a model to correctly identify all relevant instances of a particular class, often framed as the minority or underrepresented class in the dataset.
Our approach further involves framing a fair sensitivity-aware constrained FL optimization problem within the proxy-Lagrangian framework. This optimization problem is strategically addressed through a non-zero-sum two-player game strategy. In this game, each player represents a distinct aspect or objective of the overall strategy.
By introducing recall as a constraint in this game setting, one player is strategically motivated to make decisions that optimize the identification of positive instances.
So, this constraint prioritizes reducing false negatives, improving the overall detection of positive cases, and mitigating potential costs associated with missed instances.

Indeed, for each local CL $(k,n)$, the predicted traffic drop probability $\hat{y}_{k,n}^{(i)},\,(i=1,\ldots,D_{k,n})$, should minimize the main loss function with respect to the ground truth $y_{k,n}^{(i)}$ and Jensen-Shannon divergence score, while jointly respecting some long-term statistical constraints defined over its $D_{k,n}$ samples and corresponding to recall score. 
As shown in steps 1 and 7 of Fig. \ref{block}, the optimized local weights at round $t$, $\mathbf{W}_{k,n}^{(t)}$, are sent to the server which generates a global FL model for slice $n$ as,
\begin{equation}
\label{GlobalFL}
    \mathbf{W}_{n}^{(t+1)}=\sum_{k=1}^K \frac{D_{k,n}}{D_n}\mathbf{W}_{k,n}^{(t)},
\end{equation}
where $D_n=\sum_{k=1}^K D_{k,n}$ is the total data samples of all datasets related to slice $n$. The server then broadcasts the global model to all the $K$ CLs that use it to start the next round of iterative local optimization. Specifically, it leverages a two-player game strategy to jointly optimize over the objective  and original constraints as well as their smoothed surrogates and detailed in the sequel.

\subsection{Model Testing and Explanation}

As depicted in stage 2 of Fig. \ref{block}, upon the reception of the updated model graph, the Tester uses a batch drawn from the local dataset to reconstruct the test predictions $\hat{\mathbf{y}}_{k,n}^{(i)}$. All the graph, test dataset and the predictions are fed to the Explainer at stage 3. After that, at stage 4, Explainer generates the attributions
by leveraging the low-complexity Integrated Gradient (IG) scheme \cite{IG}, which is based on the gradient variation when sampling the neighborhood of a feature.
Attributions are a quantified impact of each single feature on the predicted output. Let $\mathbf{a}_{k,n}^{(i)}\in \mathbb{R}^{Q}$ denote the attribution vector of sample $i$, which can be generated by any attribution-based XAI method.

\subsection{Feature Masking}

To improve the faithfulness of the generated attributions, we introduce a novel approach to guide the following training. Specifically, we generate model's masked predictions, $\tilde{p}_{k,n}^{(i)}$, using the masked inputs, $\tilde{\mathbf{x}}_{k,n}^{(i)}$. Specifically, previously generated attributions score are used to select the lowest important features of each input sample and masking it with zero padding. Such masked dataset is employed only for evaluating the trustworthiness of the model, while the original, unmasked dataset is used during the optimization stage shown in Fig. \ref{block}. This ensures that the model gains knowledge from the entire, unaltered collection of features, avoiding any loss of essential data throughout the training process.

In this respect, the \emph{explanation-guided Mapper} at stage 5 of Fig. \ref{block} starts by selecting the lowest important features of each input sample based on their attributions which is collected from stage 4 and replace them with zero padding and after that new model output has generated. Finally this output are directly going to Divergence module.
\subsection{Divergence Module}
As mentioned above, We use this module to  calculate Jenson-Shannon divergence (JSD) score based on the predicted and masked predicted values, which we can define as $\mathbf{JS}_{k,n}\triangleq\mathbf{JS}\left(\hat{y}_{k,n}^{(i)} || \tilde{p}_{k,n}^{(i)}\right)$. It measures the similarity between two probability distributions while ensuring that the model's input or output data doesn't drastically change from a baseline.  It's derived from the Kullback-Leibler divergence (KL divergence) and provides a symmetric measure that quantifies the difference between two distributions.
The JSD between two probability distributions $\mathbf{P}$ and $\mathbf{Q}$ is defined as:
\begin{equation}
\text{JS}\left(P||Q\right)  = \frac{1}{2} \text{KL}\left(P|| M\right) + \frac{1}{2} \text{KL}\left(Q || M\right)
\end{equation}
Where, $\mathbf{KL}\left(P|| Q\right) $ represents the KL divergence from $\mathbf{Q}$ to $\mathbf{P}$, defined as:
\begin{equation}
\text{KL}(P || Q) = \sum_{i} P(i) \log \left(\frac{P(i)}{Q(i)}\right)
\end{equation}
and $\mathbf{M}$ represents the average distribution, defined as:
\begin{equation}
    M = \frac{1}{2}(P + Q)
\end{equation}
In our context, $\mathbf{P}$ represents as $\hat{y}_{k,n}^{(i)}$, the predicted distribution by the local model and $\mathbf{M}$ represents as $\tilde{p}_{k,n}^{(i)}$, the masked predicted distribution.
So, the idea behind including JSD along with the cross-entropy loss as our objective function is to train and encourage the models to generate samples similar to a target distribution considering both the original input and masked input. In this way, our model will learn to allocate low gradient values to irrelevant features in model predictions during its learning process. It will result in faithful gradients, giving unbiased and actual model decisions at the end of the learning.

\subsection{Explanation-Guided Fair Traffic Drop Prediction}
To ensure fairness, we invoke the \emph{recall} as a measure of the sensitivity of the FL local predictor. It is defined as the proportion of instances that truly belong to the positive class (drop traffic) and are correctly identified as such by the local predictor, which we denote $\rho_{k,n}$, i.e.,
\label{recall}
\begin{equation}
    \rho_{k,n}= \pi^{+}\left(\mathcal{D}_{k,n}\left[\hat{y}_{k,n}^{(i)}=1\right]\right)
\end{equation}
Where, $\pi^{+}(\mathcal{D}_{k,n})$ defines the proportion of $\mathcal{D}_{k,n}$ classified positive, and $\mathcal{D}_{k,n}[*]$ is the subset of $\mathcal{D}_{k,n}$ satisfying expression *.Note that, the condition for the positive class, indicating the drop traffics, is defined by $\hat{y}_{k,n}^{(i)} = 1$. 
 The subset $\mathcal{D}_{k,n}\left[\hat{y}_{k,n}^{(i)}=1\right]$ means that elements in the dataset $\mathcal{D}_{k,n}$ with a predefined condition are classified as belonging to the drop traffic category. So, the entire expression $\pi^{+}\left(\mathcal{D}_{k,n}\left[\hat{y}_{k,n}^{(i)}=1\right]\right)$ calculates the proportion of instances in $\mathcal{D}_{k,n}$ that are correctly classified as drop traffic, indicating the positive class for the recall calculation.

In order to trust the traffic drop anomaly detection/classification, a set of SLA is established between the slice tenant and the infrastructure provider, where a lower bound $\gamma_{n}$ is imposed to the recall score.
This translates into solving a constrained local classification problem for each local epoch in FL rounds $t\,(t=0,\ldots,T-1)$ i.e.,

\begin{subequations}
\label{OPT1}
\begin{equation}
    \min_{\mathbf{W}_{k,n}^{(t)}}\, \frac{1}{D_{k,n}}\sum_{i=1}^{D_{k,n}}\ell\left(y_{k,n}^{(i)}, \hat{y}_{k,n}^{(i)}\left(\mathbf{W}_{k,n}^{(t)},\mathbf{x}_{k,n}\right)\right) + \mathbf{JS}\left(\hat{y}_{k,n}^{(i)} || \tilde{p}_{k,n}^{(i)}\right)
\end{equation}

\begin{equation}
     \mathrm{s.t.}\hspace{5mm}\rho_{k,n} \geq \gamma_{n}\label{recall},
\end{equation}
\end{subequations}

Where, $\ell(.)$ is the cross-entropy loss function between the true labels ${y}_{k,n}^{(i)}$ and the model predictions $\hat{y}_{k,n}^{(i)}$ based on the current model weights $\mathbf{W}_{k,n}^{(t)}$ and input features $\mathbf{x}_{k,n}$ , while   $\mathbf{JS}(\hat{y}_{k,n}^{(i)}||\tilde{p}_{k,n}^{(i)})$ stands for Jenson-Shannon divergenceThe cross-entropy loss evaluates how well the model predictions match the true labels, and the Jenson-Shannon divergence quantifies the dissimilarity between the predicted distribution and masked distribution. Here, we choose not to set explicit weights for these components to ensure equal importance, balancing fidelity to true labels and consistency with the data distribution to meet SLA requirements effectively. This approach capitalizes on these components' inherent balance and complementary nature, addressing the specific needs of traffic anomaly detection.
The constraint enforces a minimum recall $\rho_{k,n}$ to meet the specified lower bound $\gamma_{n}$. This reflects the requirement from the SLA that the local predictor should achieve a certain level of sensitivity in identifying positive instances (traffic drop) to ensure the reliability of the anomaly detection/classification.
This optimization problem is solved by invoking the so-called \emph{proxy Lagrangian} framework \cite{Cotter}, since the recall is not a smooth constraint. This consists first on constructing two Lagrangians as follows:
\begin{subequations}
\label{ProxyLagrangian}
\begin{equation}
\begin{split}
    \mathcal{L}_{\mathbf{W}_{k,n}^{(t)}}=&\frac{1}{D_{k,n}}\sum_{i=1}^{D_{k,n}}\ell\left(y_{k,n}^{(i)}, \hat{y}_{k,n}^{(i)}\left(\mathbf{W}_{k,n}^{(t)},\mathbf{x}_{k,n}\right)\right)
    \\& + \mathbf{JS}\left( \hat{y}_{k,n}^{(i)} || \tilde{p}_{k,n}^{(i)}\right)
     +\lambda_1\Psi_1\left(\mathbf{W}_{k,n}^{(t)}\right),
\end{split}
\end{equation}
\begin{equation}
    \mathcal{L}_{\lambda}=\lambda_1\Phi_1\left(\mathbf{W}_{k,n}^{(t)}\right)
\end{equation}
\end{subequations}
where $\Phi_{1}$ and $\Psi_{1}$ represent the original constraints and their smooth surrogates, respectively. Here, $\mathcal{L}_{\mathbf{W}_{k,n}^{(t)}}$ minimizes the average loss over the dataset, the Jensen-Shannon divergence, and satisfies a constraint $\Psi_1$ using a Lagrange multiplier $\lambda_1$. While, $\mathcal{L}_{\lambda}$ uses a Lagrange multiplier $\lambda_1$ to enforce the original constraint  $\Phi_{1}$. The recall surrogate is given by,
\begin{equation}
    \Psi_1 = \frac{\sum_{i=1}^{D_{k,n}}{y}_{k,n}^{(i)} \times \min\Bigl\{\hat{y}_{k,n}^{(i)}, 1\Bigl\}}{\sum_{i=1}^{D_{k,n}} {y}_{k,n}^{(i)}} - \gamma_n
\end{equation}
It represents the proportion of actual positive instances that were correctly predicted, considering the lower bound of the recall score. Specifically, $\min({\hat{y}_{k,n}^{(i)},1})$ function takes the minimum value between the predicted value $\hat{y}_{k,n}^{(i)}$ and 1. The purpose is to ensure that the predicted value does not exceed 1, which is relevant when dealing with probability scores.The numerator ${\sum_{i=1}^{D_{k,n}}{y}_{k,n}^{(i)} \times \min\Bigl\{\hat{y}_{k,n}^{(i)}, 1\Bigl\}}$  represents the sum of the minimum values between the predicted scores and 1 for instances where the true label is positive (${y}_{k,n}^{(i)} = 1$). This is essentially the count of instances where the model correctly predicted positive instances (true positives), considering the lower bound of the recall score.In summary, the equation is a surrogate for recall, emphasizing the correct prediction of actual positive instances while considering the lower bound of the predicted values, especially when dealing with probability scores. The multiplication in the numerator ensures that only cases with true positive labels are considered in the summation, contributing to the recall calculation.

So, this optimization task turns out to be a non-zero-sum two-player game in which the $\mathbf{W}_{k,n}^{(t)}$-player aims at minimizing $\mathcal{L}_{\mathbf{W}_{k,n}^{(t)}}$, while the $\lambda$-player wishes to maximize $\mathcal{L}_{\lambda}$ \cite[Lemma 8]{TwoPlayer}. While optimizing the first Lagrangian w.r.t. $\mathbf{W}_{k,n}$ requires differentiating the constraint functions $\Psi_1(\mathbf{W}_{k,n}^{(t)})$, to differentiate the second Lagrangian w.r.t. $\lambda$ we only need to evaluate $\Phi_1\left(\mathbf{W}_{k,n}^{(t)}\right)$. Hence, a surrogate is only necessary for the $\mathbf{W}_{k,n}$-player; the $\lambda$-player can continue using the original constraint functions. The local optimization task can be written as,
\begin{subequations}
\label{ProxyLagrangian}
\begin{equation}
\begin{split}
    \min_{\mathbf{W}_{k,n}\in \Delta} \,\,\,\,\max_{\lambda,\, \norm{\lambda}\leq R_\lambda}\,\,\mathcal{L}_{\mathbf{W}_{k,n}^{(t)}}
\end{split}
\end{equation}
\begin{equation}
        \max_{\lambda,\, \norm{\lambda}\leq R_\lambda}\,\,\,\,\min_{\mathbf{W}_{k,n}\in \Delta} \mathcal{L}_{\lambda},
\end{equation}
\end{subequations}
where thanks to Lagrange multipliers, the $\lambda$-player chooses how much to weigh the proxy constraint functions, but does so in such a way as to satisfy the original constraints, and ends up reaching a nearly-optimal nearly-feasible solution \cite{Gordon}. These steps are all summarized in Algorithm 1.

\begin{figure*}[thb]
    \centering
    \subfloat[ \centering Analysis of convergence with two approaches]{
    \label{loss_a}
          \includegraphics[width=0.40\textwidth]{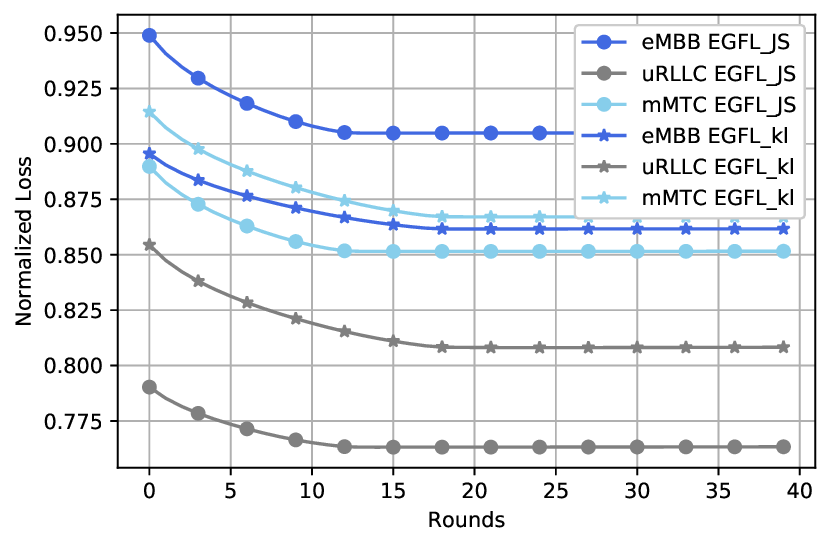}\hspace{1.2cm}}
    \qquad 
    \subfloat[ \centering Analysis of convergence with additional approaches for eMBB slice]{
    \label{loss_eMBB}
          \includegraphics[width=0.40\textwidth]{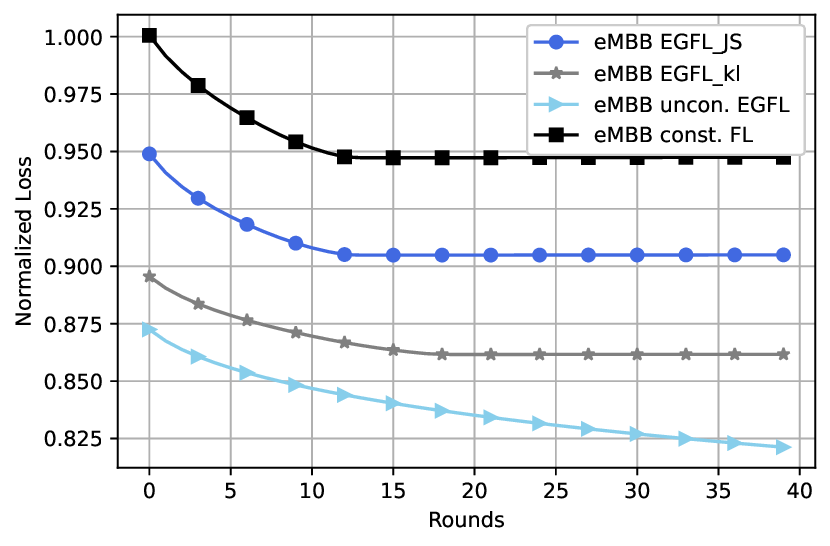}}
     \qquad 
    \subfloat[ \centering Analysis of convergence with additional approaches for uRLLC slice]{
    \label{loss_b}
          \includegraphics[width=0.40\textwidth]{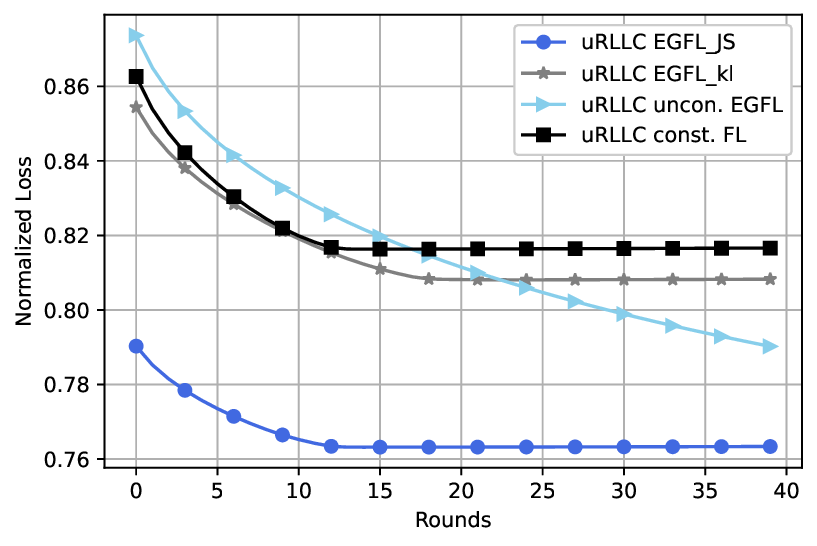}}
     \qquad 
    \subfloat[ \centering Analysis of convergence with additional approaches for mMTC slice]{
    \label{loss_mmtc}
          \includegraphics[width=0.40\textwidth]{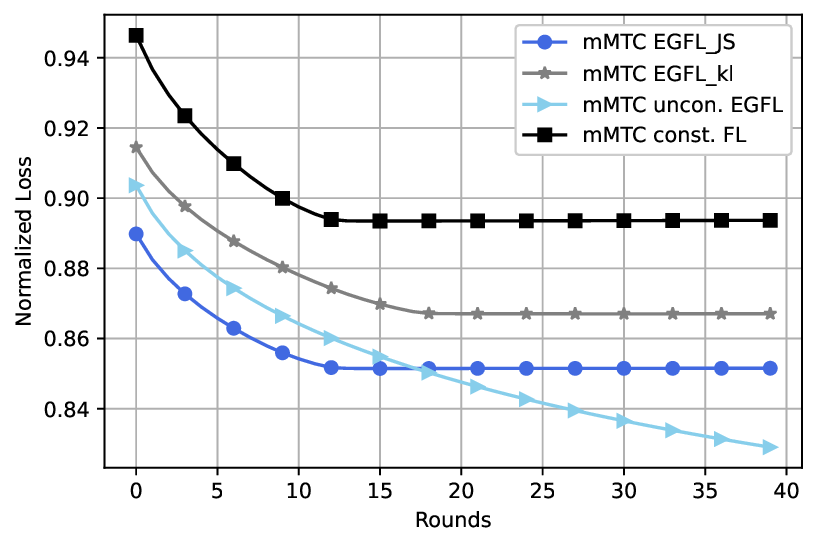}}
    
\caption{Analysis of FL training loss vs FL rounds of EGFL with lower bound of recall score, $\gamma = [0.82, 0.85, 0.84]$ }
\label{loss}
\end{figure*}

\section{EGFL Convergence Analysis}

Within this section, we examine the probability of convergence for the explanation-guided federated learning method.
This analysis draws inspiration from our previous work \cite{offline}, which focuses on resource provisioning in SDN/NFV-based 5G networks. The prior work introduces a traffic predictor, employs multi-slice DNNs for resource estimation under SLAs, and integrates non-convex constraints through a game strategy. Hyperparameters balance overprovisioning and isolation, and the study provides a reliability-based analysis of convergence probability influenced by violation rates.
To accomplish this, we establish a closed-form formula that determines the minimum probability of convergence, considering various factors such as the Lagrange multiplier radius, the optimization oracle error, and the violation rate, while incorporating the Jensen-Shannon divergence.
 In practical implementation, we apply the specified objective function along with its associated constraints and proxy constraints using Google's constrained optimization package \cite{constrained}. This package employs two distinct approaches for optimizing the Lagrangians: a Bayesian optimization oracle for $\mathcal{L}_{W,n}$ and projected gradient ascent for $\mathcal{L}_{\lambda}$.The oracle, which is integral to the optimization process, can be defined as follows:
\begin{defn}[Approximate Bayesian Optimization Oracle]
An oracle $\mathcal{O}{\delta}$ for $\delta$-approximate Bayesian optimization refers to a procedure that takes a loss function/Lagrangian $\mathcal{L}$ as input and outputs quasi-optimal weights $\mathbf{W}{k,n,l}$, satisfying,
\begin{equation}
\mathcal{L}\left(\mathcal{O}_{\delta}\left(\mathcal{L}\right)\right)\leq \inf_{\mathbf{W}_{k,n,l}^{\star}} \mathcal{L} \left(\mathbf{W}_{k,n,l}^{\star}\right)+\delta. 
\end{equation}
\end{defn}
The key theorem can therefore be formulated as,
\begin{theorem}[EGFL Convergence Analysis]\label{Thm1}
We consider that EGFL follows a geometric failure model and fails to meet the recall constraint with an average violation rate $0<\nu<1$. It is also assumed that an oracle $\mathcal{O}_{\delta}$ with error---$\delta$ excluding the JS divergence---optimizes $\mathcal{L}_{\mathbf{W}{k,n}^{(t)}}$, while $R_\lambda$ and $B_{k,n}$ represent the Lagrange multipliers radius and the maximum subgradient norm, respectively. In such a scenario, the convergence probability of EGFL can be expressed as,

\begin{equation}
\begin{aligned}
    \mathbf{Pr} \bigg[\frac{1}{T}\sum_{t=1}^{T}&\Big(\mathcal{L}\left(\mathbf{W}_{n}^{(t)},\lambda^{\star}\right)\\ &- \inf_{\mathbf{W}_n^{\star}} \mathcal{L}\left(\mathbf{W}_n^{\star},\lambda^{(t)}\right)\Big) < \epsilon \bigg]\geq \Delta\left(\nu, \epsilon\right),\label{conv}
    \end{aligned}
\end{equation}
where
\begin{equation}
    \Delta\left(\nu, \epsilon\right) = 1 - \frac{\nu}{1+(\nu-1)\exp{\frac{-(D_n\epsilon)^2}{2(2R_\lambda \sum_{k=1}^{K}D_{k,n}B_{k,n}+D_n\alpha)^2}}}.\label{Q}
\end{equation}
and $\alpha = \delta+\ln(1-V^2/4)$ while $V$ stands for the total variation distance between $\hat{y}_{k,n}^{(i)}$ and $\tilde{p}_{k,n}^{(i)}$.
\end{theorem}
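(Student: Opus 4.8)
The plan is to marry the deterministic regret guarantee of the proxy-Lagrangian two-player game with an information-theoretic estimate of the Jensen--Shannon term and, finally, a concentration argument driven by the geometric failure model. First I would invoke the two-player game machinery of the proxy-Lagrangian framework \cite{TwoPlayer,Cotter} to control the per-round gap $\mathcal{L}(\mathbf{W}_n^{(t)},\lambda^\star)-\inf_{\mathbf{W}_n^\star}\mathcal{L}(\mathbf{W}_n^\star,\lambda^{(t)})$. Because the $\mathbf{W}_{k,n}$-player only accesses the $\delta$-approximate oracle $\mathcal{O}_\delta$ while the $\lambda$-player runs swap-regret-minimizing exponentiated-gradient ascent over the radius-$R_\lambda$ multiplier set, the regret of the game bounds each per-round gap by a quantity governed by the oracle error and by the product $R_\lambda B_{k,n}$ of the multiplier radius and the maximal subgradient norm; this is the origin of the $2R_\lambda B_{k,n}$ contribution to the range in \eqref{Q} (the factor two coming from the symmetric multiplier interval $[-R_\lambda,R_\lambda]$ of width $2R_\lambda$).

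Next I would fold in the divergence and the federation. Since $\mathcal{O}_\delta$ and its error $\delta$ are defined \emph{excluding} the JS term, I would estimate the extra contribution of $\mathbf{JS}(\hat{y}_{k,n}^{(i)}\|\tilde{p}_{k,n}^{(i)})$ to the Lagrangian through a Bretagnolle--Huber / Lin-type inequality adapted to the symmetric JS divergence, expressing it via the total variation distance $V$ between $\hat{y}_{k,n}^{(i)}$ and $\tilde{p}_{k,n}^{(i)}$ (the $V^2/4$ reflecting the $\tfrac12(\,\cdot\,)+\tfrac12(\,\cdot\,)$ mixture structure of JS). This replaces the bare $\delta$ by the corrected error $\alpha=\delta+\ln(1-V^2/4)$; pinning down the exact inequality and its sign, so that the divergence \emph{tightens} rather than inflates the effective error, is one delicate point. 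I would then pass from local to global quantities: using the aggregation rule \eqref{GlobalFL} and convexity of the Lagrangian gap, the global per-round gap is the $D_{k,n}/D_n$-weighted average of the local ones, so multiplying through by $D_n$ turns the weighted subgradient term into $2R_\lambda\sum_{k=1}^K D_{k,n}B_{k,n}$ and the error into $D_n\alpha$, yielding exactly the range $R\triangleq 2R_\lambda\sum_{k=1}^K D_{k,n}B_{k,n}+D_n\alpha$ of \eqref{Q}.

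Finally I would make the probability statement precise, and this is where I expect the real work. Writing the averaged gap as a normalized sum of $D_n$ per-sample contributions lying in a band of total width $R$, the event in \eqref{conv} is $\{\text{sum}<D_n\epsilon\}$. Treating the recall-constraint violations over these $D_n$ samples as a geometric failure process with rate $\nu$, a Hoeffding/Azuma-type tail bound furnishes a single-trial success probability at least $1-q$ with $q=\exp\!\big(-(D_n\epsilon)^2/2R^2\big)$, matching the exponential in \eqref{Q}. To obtain the full closed form I would write the odds of overall failure as the product of the intrinsic geometric-failure odds $\nu/(1-\nu)$ and the factor $1/(1-q)=\sum_{j\ge 0}q^{\,j}$ accumulating the per-round failure contributions as a geometric series; converting these odds back to a probability gives $\mathbf{Pr}[\text{fail}]=\nu/\big(1-(1-\nu)q\big)$, hence $\Delta(\nu,\epsilon)=1-\nu/\big(1+(\nu-1)q\big)$ as claimed, and I would close with the boundary checks $\Delta\to 0$ as $\epsilon\to 0$ and $\Delta\to 1-\nu$ as $\epsilon\to\infty$. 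The main obstacle is precisely this last fusion: justifying why the deterministic game/oracle range $R$ may be inserted into a concentration inequality over the samples, and why the geometric rate $\nu$ enters multiplicatively to produce the stated M\"obius form, since this step rests on the paper's particular probabilistic modelling rather than on a single off-the-shelf theorem.
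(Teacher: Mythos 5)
Your deterministic bounding of the per-round gap essentially reproduces the paper's: the paper obtains $\mathcal{U}^{(t)}\leq C$ via the subgradient inequality, Cauchy--Schwarz, the bound $\norm{\lambda^{\star}-\lambda^{(t)}}\leq 2R_\lambda$, the FL aggregation rule \eqref{GlobalFL} to pass to the $D_{k,n}/D_n$-weighted sum, the oracle error $\delta$ of Definition 1, and the Lin-type lower bound $\mathbf{JS}_{k,n} > -\ln(1-V^2/4)$ to fold the divergence into $\alpha=\delta+\ln(1-V^2/4)$. You recover all of these ingredients, attributing the $2R_\lambda B_{k,n}$ term to game regret rather than to Cauchy--Schwarz plus the multiplier-ball diameter; that difference is cosmetic, and your flagged ``delicate point'' about the sign of the JS correction is resolved in the paper exactly as you suspect (the lower bound on $\mathbf{JS}_{k,n}$ makes the divergence tighten, not inflate, the constant $C$).

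The genuine gap is in the final probabilistic step. In the paper, Hoeffding--Azuma is applied to the \emph{time} average $\frac{1}{T}\sum_{t}\mathcal{U}^{(t)}$ conditioned on the number of fair rounds $T=\tau$, giving $\Pr[\,\cdot \mid T=\tau\,]\geq 1-q^{\tau}$ with $q=\exp\left(-\epsilon^2/(2C^2)\right)$; the exponent $\tau$ is essential. The closed form then follows by taking the expectation over the geometric law $P_\tau=\nu(1-\nu)^{\tau}$ and summing the series $\sum_{\tau}\left((1-\nu)q\right)^{\tau}=1/\left(1-(1-\nu)q\right)$, which is precisely what produces the form in \eqref{Q}; the $D_n$ appearing there is purely notational (multiply $\epsilon/C$ by $D_n/D_n$), not the fruit of concentration over samples. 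Your construction instead concentrates over the $D_n$ data samples, yielding a $\tau$-independent success probability $1-q$; averaging a constant $1-q$ over the geometric failure time returns $1-q$, not $\Delta(\nu,\epsilon)$, so the stated formula cannot emerge this way. The subsequent step --- writing the failure odds as $\frac{\nu}{1-\nu}\cdot\frac{1}{1-q}$ --- is reverse-engineered: it is algebraically consistent with \eqref{Q}, but no probabilistic argument is given (or available) for why odds should multiply like that, as you yourself concede. The missing idea is the conditioning structure: the concentration bound must decay exponentially in the random horizon $\tau$, so that the geometric series in $(1-\nu)q$ can be summed against the geometric failure distribution.
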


\begin{proof}
First, relying on the subgradient inequality, we have at round $t$,
\begin{equation}
    \mathcal{E}^{(t)}=\mathcal{L}\left(\mathbf{W}_n^{(t)},\lambda^{*}\right) -  \mathcal{L}\left(\mathbf{W}_n^{(t)},\lambda^{(t)}\right)\leq \langle\nabla\mathcal{L}^{(t)},\lambda^{*}-\lambda^{(t)}\rangle.
\end{equation}
Then, by decomposing the
By means of Cauchy-Schwarz inequality, we get,
\begin{equation}
    \mathcal{E}^{(t)} \leq \norm{\nabla\mathcal{L}\left(\mathbf{W}_{n}^{(t)},\lambda^{(t)}\right)}\norm{\lambda^{\star}-\lambda^{(t)}}. \label{Holder}
\end{equation}
By invoking the federated learning aggregation (\ref{GlobalFL}), we can write
\begin{equation}
    \nabla\mathcal{L}\left(\mathbf{W}_{n}^{(t)},\lambda^{(t)}\right)=\sum_{k=1}^K \frac{D_{k,n}}{D_n}\nabla\mathcal{L}\left(\mathbf{W}_{k,n}^{(t)},\lambda^{(t)}\right).
    \label{GradFed}
\end{equation}
Therefore, from (\ref{Holder}) and (\ref{GradFed}) and by invoking the triangle inequality we have,
\begin{equation}
\begin{aligned}
   \mathcal{E}^{(t)}&\leq \sum_{k=1}^K \frac{D_{k,n}}{D_n}\norm{\nabla\mathcal{L}\left(\mathbf{W}_{k,n}^{(t)},\lambda^{(t)}\right)}\norm{\lambda^{\star}-\lambda^{(t)}}\\
    &\leq2R_\lambda \sum_{k=1}^K \frac{D_{k,n}}{D_n} B_{k,n}. \label{triangle}
\end{aligned}
\end{equation}
At this level, let
\begin{equation}
    \mathcal{U}^{(t)}=\mathcal{L}\left(\mathbf{W}_n^{(t)},\lambda^{\star}\right) -  \inf_{\mathbf{W}_n^{\star}} \mathcal{L}\left(\mathbf{W}_n^{\star},\lambda^{(t)}\right). \label{U} 
\end{equation}
On the other hand, from \cite{bounds_js, js}, we have
\begin{equation}
    \mathbf{JS}_{k,n} > - \ln(1 - \frac{V^2}{4}), \label{ineq_js}
\end{equation}

Combining (\ref{triangle}), (\ref{U}) and (\ref{ineq_js}) with Definition 1 yields, 
\begin{equation}
  \mathcal{U}^{(t)}\leq 2R_\lambda \sum_{k=1}^K \frac{D_{k,n}}{D_n} B_{k,n} + \delta + \ln(1 - \frac{V^2}{4})= C.
\end{equation}
By means of Hoeffding-Azuma's inequality \cite{hoeffding}, we have,
\begin{equation}
    \Pr[\frac{1}{T}\sum_{t=1}^{T}\mathcal{U}^{(t)}< \epsilon\mid T=\tau]\geq 1 - \exp{-\frac{\tau\varepsilon^2}{2C^2}},\label{Azuma},
\end{equation}
where we consider that the federated learning model is fair, i.e., respecting the recall constraint up to and including time $T=\tau$. Therefore, by recalling the geometric failure probability mass function $P_{\tau}$ given by,

\begin{equation}
    P_{\tau} = \nu\left(1 - \nu \right)^{\tau},
\end{equation}
and combining it with (\ref{Azuma}), yields
\begin{equation}
\Pr[\frac{1}{T}\sum_{t=1}^{T}\mathcal{U}^{(t)}< \epsilon]\geq\sum_{\tau=0}^{+\infty}\nu\left(1 - \nu \right)^{\tau}\times\left(1-\exp{-\frac{\tau\epsilon^2}{2C^2}}\right).
\end{equation}
Finally, by noticing that $\nu < 1$ and using the infinite geometric series closed-form, we obtain the target result as in (\ref{conv}) and (\ref{Q}).
\end{proof}

\begin{table}[t]
\centering	
\caption{Settings}
\begin{tabular}{ccc}
\hline 
\hline
Parameter & Description & Value\\
\hline
$N$ & \# Slices & $3$\\
$K$ & \# BSs & $50$\\ 
$D_{k,n}$ & Local dataset size & $1500$ samples\\ 
$T$ & \# Max FL rounds & $40$\\  
$U$ & \# Total users (All BSs)& $15000$\\
$L$ & \# Local epochs & $40$\\ 
$R_{\lambda}$ & Lagrange multiplier radius &  Constrained: $10^{-5}$\\
$\eta_{\lambda}$ & Learning rate & $0.12$ \\ 
\hline
\hline
\label{FLsettings}
\end{tabular}
\label{setting}
\end{table}

\section{Results}
This section scrutinizes the proposed Closed loop EGFL framework.
The dataset used for analysing this framework has already mentioned in the section IV. We use feature attributions generated by the XAI to build a fair sensitivity-aware, explanation-guided, constrained traffic drop prediction model.
We also present the results corresponding to the fairness recall constraint and discuss its impact in detail. Moreover, we show FL convergence results and, finally, we analyze the comprehensiveness score of all slices to show the faithfulness of our proposed model. Notably, we invoke \texttt{DeepExplain} framework, which includes state-of-the-art gradient and perturbation-based attribution methods \cite{Deep} to implement the Explainer for generating attribution scores of the input features. We integrate those scores with our proposed fair EGFL framework in a closed-loop iterative way.

\subsection{Parameter Settings and Baseline}
Three primary slices eMBB, uRLLC and mMTC are considered to analyze the proposed EGFL. Here, the datasets are collected from the BSs and the overall summary of those datasets are presented in Table \ref{setting}. We use vector $\alpha$ for the lower bound of recall score corresponding to the different slices. As a baseline, we adopt a vanilla FL \cite{Van_FL} with post-hoc integrated gradient explanation, that is, a posterior explanation performed upon the end of the FL training.


\subsection{Result Analysis}
In this scenario, resources are allocated to slices according to their traffic patterns and radio conditions.In this section, we thoroughly compare our proposed EGFL-JS scheme to other compatible state-of-the-art baselines from various perspectives, such as convergence, fairness, and faithfulness analysis, to show its superior efficacy. 
\begin{figure}[t]
\centering
     \includegraphics[width=0.41\textwidth]{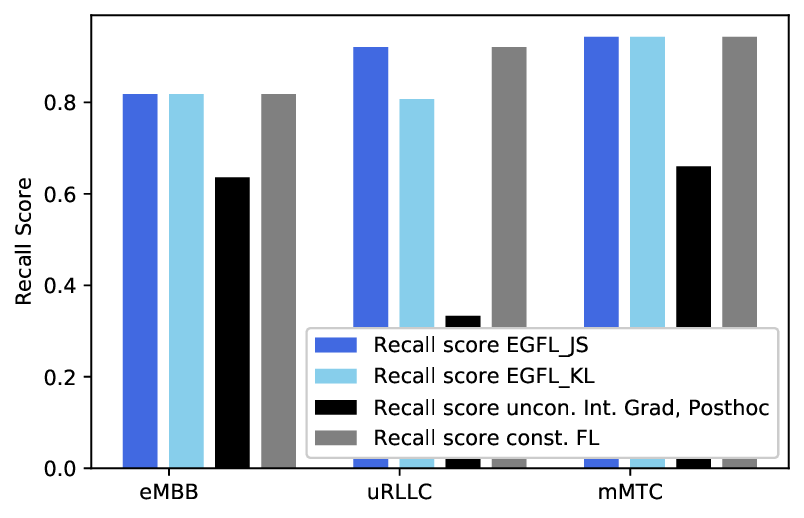}
\caption{Analysis of recall score with lower bound $\gamma = [0.82, 0.85, 0.84]$ }
\label{recall}
\end{figure}

\begin{figure}[t]
\centering
     \includegraphics[width=0.47\textwidth]{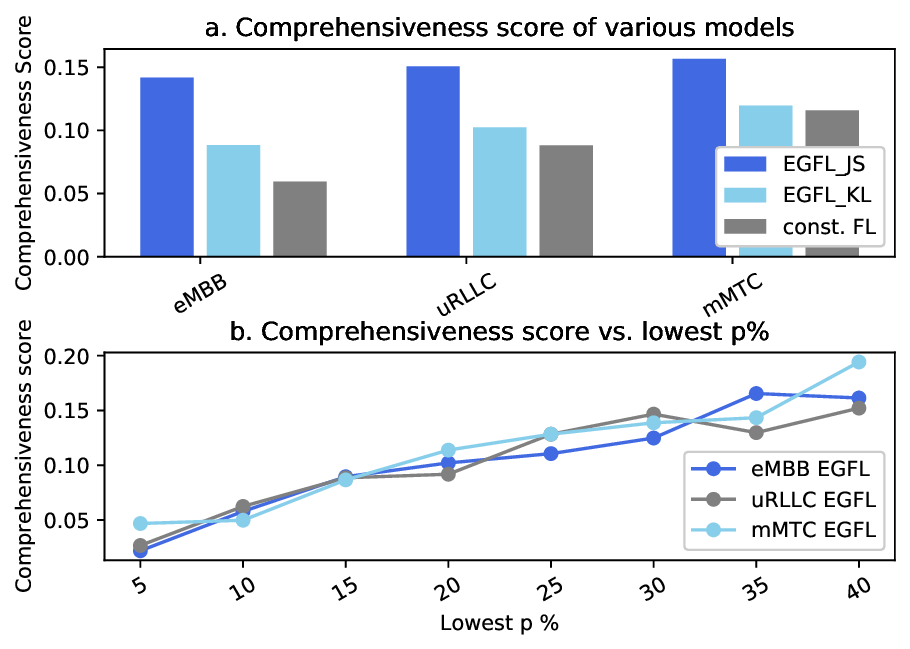}
\caption{Analysis of comprehensiveness score with lower bound of recall score, $\gamma = [0.82, 0.85, 0.84]$}
\label{comp}
\end{figure}

\begin{figure*}[thb]
    \centering
    \subfloat[ \centering Distribution plot for uRLLC network slice]{
    \label{att_dist}
        \includegraphics[width=0.46\textwidth,height=100pt]{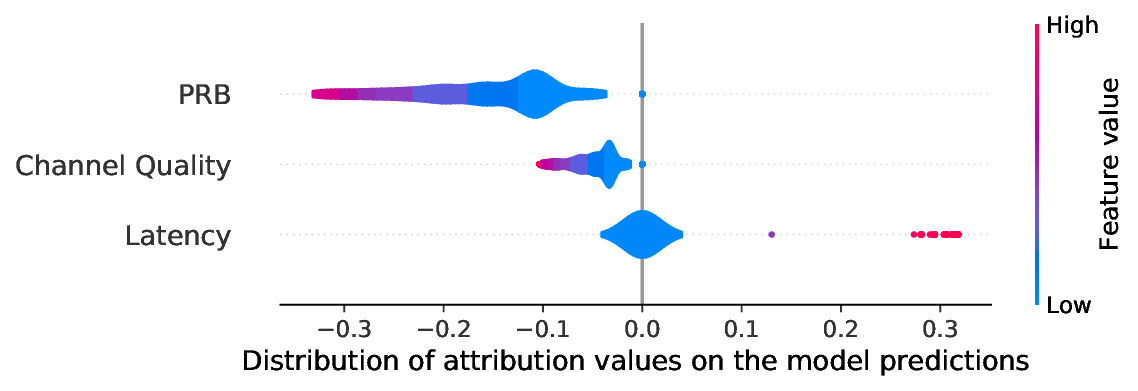}\hspace{1.2cm}}
    \qquad 
    \subfloat[ \centering Attribution-based feature impact on the model output for uRLLC network slice]{
    \label{att_avg}
          \includegraphics[width=0.38\textwidth]{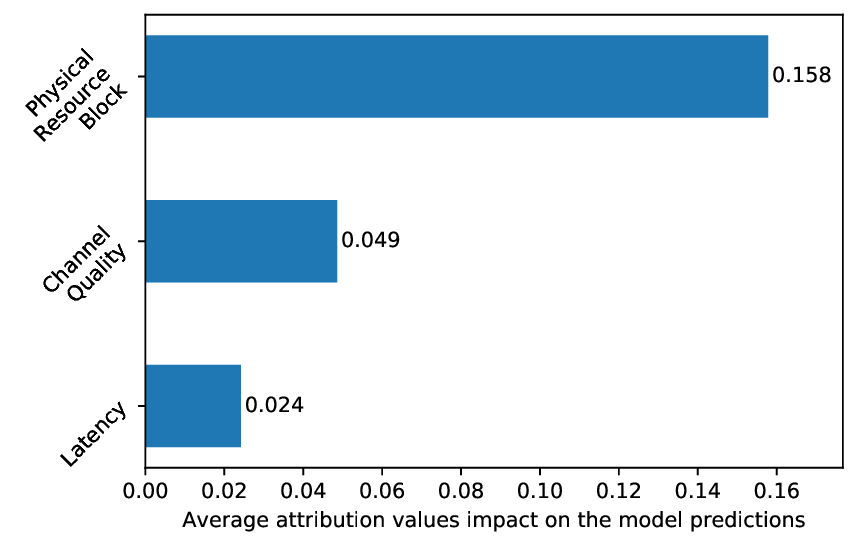}}
\caption{Root cause analysis of EGFL-JS predictions in a telecommunication network.}
\end{figure*}

\begin{figure}[htb]
\centering
\label{heatmap}
     \includegraphics[width=0.41\textwidth]{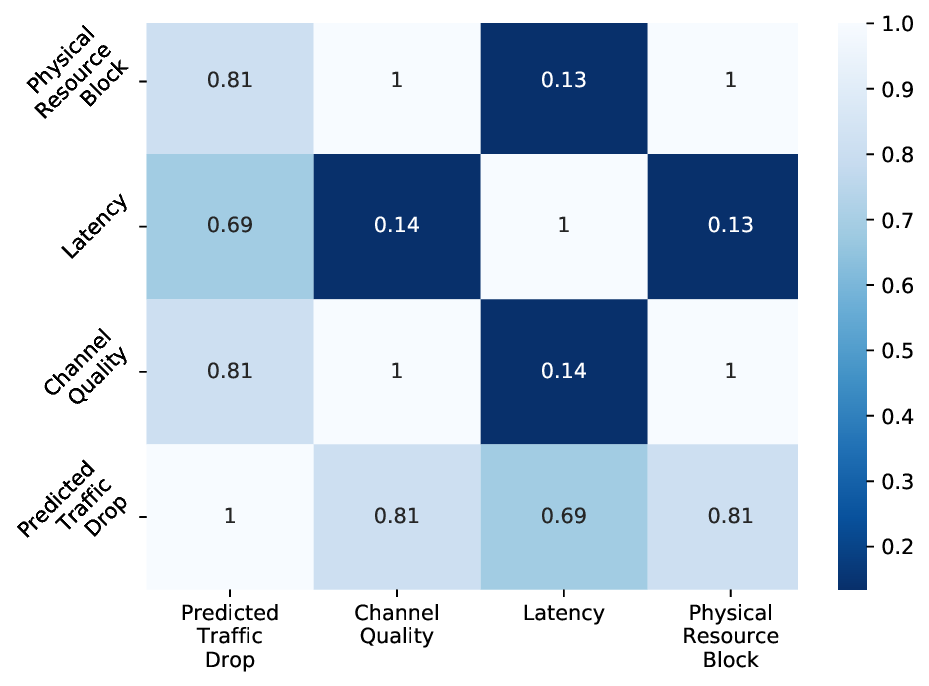}
\caption{Correlation Heatmap for uRLLC network slice}
\label{heatmap}
\end{figure}

\begin{itemize}
\item \textbf{Convergence:}    
First, we have plotted FL normalized training loss vs. FL rounds of EGFL-JS, and shown the comparison with another compatible strategy termed EGFL-KL. The key distinction between the two techniques is the divergence measure used to guide their respective loss functions. The JSD guides the cross-entropy loss function in our proposed solution, whereas the EGFL-KL model uses the Kullback-Leibler Divergence (KL Divergence) to guide its loss function. It is known that JSD, a symmetrized variant of KL Divergence, is less sensitive to severe discrepancies across distributions, which may help speed up convergence. 
We visually depict both models' convergence behavior across several slices in Fig. \ref{loss_a}.
In analyzing this figure, clear distinctions emerge in the normalized loss across the three slices (eMBB, mMTC, and uRLLC). These variations can be attributed to unique factors such as diverse traffic patterns, data fluctuations, slice-specific demands, and the models' sensitivity. Specifically, slices like eMBB and mMTC, characterized by more dynamic traffic, may undergo a higher normalized loss during the convergence process. In contrast, uRLLC, associated with potentially more stable traffic conditions, demonstrates a lower normalized loss. This indicates that the observed differences are intricately tied to each slice's inherent characteristics and requirements.

Notably, compared to the EGFL-KL model, our suggested model consistently shows a higher rate of convergence. This finding is consistent with what we predicted since JSD is naturally less susceptible to being impacted by severe distribution discrepancies, which results in a smoother optimization landscape and, as a result, faster convergence. Additionally, we broaden the scope of our research beyond the EGFL-KL model by including two additional methods: unconstrained EGFL (without recall constrained) and recall-constrained FL (without Jensen-Shannon Divergence). 
The results from Fig. \ref{loss_eMBB}, Fig. \ref{loss_b}, and Fig. \ref{loss_mmtc} indicate that the proposed constrained EGFL strategy converges faster for all slices compared to unconstrained EGFL. Additionally, EGFL exhibits lower training losses compared to constrained FL. The absence of a recall constraint in unconstrained EGFL leads to a continuous decrease in normalized loss without convergence, as the model is not explicitly optimized for accurate positive instance identification, particularly in unbalanced datasets. In contrast, incorporating recall constraints during training with the proposed approach, such as EGFL-KL, EGFL-JS, and const. FL, facilitates fast convergence across all slices. So, the empirical results of Fig. \ref{loss} support the improved convergence characteristics of our proposed model, especially with the inclusion of JSD in the loss function. These findings underscore the efficiency and effectiveness of our approach in various real-world scenarios.

\item \textbf{Fairness Analysis:} 
As illustrated in Fig. \ref{recall}, the recall metric has been chosen to analyze our proposed model's fairness. From that figure, we can observe that the recall score of the proposed EGFL-JS for all slices is closer to the target threshold $\lambda$ (i.e., around 0.80 $\%$), which would be an acceptable value for slices' tenants.
In summary, the recall scores of EGFL-JS, EGFL-KL and constrained FL for all slices are almost identical and higher than the unconstrained EGFL. In comparison, the training loss value of EGFL is lower than other approaches. The model with divergence included is better at generalizing to new, unseen data. However, having the same recall score indicates that the model's ability to identify positive samples has not been affected by the inclusion of divergence, which is a favorable sign. This can be fruitful in cases where the model will be deployed in real-world applications, as it suggests that the model is more likely to perform well on new data that it has not seen during training.
It gives us an approximate idea of our model's reliability and trustworthiness.

\item \textbf{Faithfulness of Explanation:} Comprehensiveness evaluates if all features needed to make a prediction are selected \cite{EGL}. The model comprehensiveness is calculated as:
Comprehensiveness = $\hat{y}_{k,n}^{(i)} - \tilde{p}_{k,n}^{(i)}$.
Here, the higher score implies that the explanation provided is complete, comprehensive, and so helpful to the user. In Fig. \ref{comp} we have plotted the comprehensiveness score of all the slices, where we notice that the proposed EGFL has a higher score than the EGFL-KL and baseline one. Moreover, in Fig. \ref{comp}b, we monitor the effect of the changing value of the $p\%$ features removal on the comprehensiveness score, considering the proposed model for all slices.
Specifically, during the calculation of comprehensiveness score based on the removal of $p\%$ of features with lowest attributions, we have created a new masked input, denoted as $\tilde{\mathbf{x}}_{k,n}^{(i)}$ for each $p\%$ value and also have generated the corresponding masked predictions, $\tilde{p}_{k,n}^{(i)}$.
Here, the score increases versus the lowest $p\%$, which means that the masked features are decisive and the corresponding explanation in terms of attribution is comprehensive.

\item \textbf{Root cause analysis in wireless networks and Interpretation:} 
Feature attributions generated by XAI methods reflect the contribution of the input features to the output prediction. From a telecommunication point of view, this can be leveraged to perform root cause analysis of different network anomalies and KPIs degradation. In this respect, we investigate the factors that impact slice-level packet drop by plotting, for instance, the distribution of the uRLLC slice feature attributions in Fig. \ref{att_dist} and their average in Fig. \ref{att_avg}. The first subfigure demonstrates that EGFL traffic drop predictions are negatively influenced by PRB and the channel quality, since the attributions are distributed and concentrated towards negative values, implying that higher values of these features reduce the likelihood of traffic drops. It is noted that PRB has a slightly higher impact compared to channel quality. In this regard, Fig. \ref{att_avg} further supports these observations by confirming that in average PRB has the most influence on the model's output, reaffirming the importance of PRB allocation in reducing traffic drops. This observation aligns with the notion that higher PRB allocation implies dedicating more resources to the uRLLC slice, which in turn relaxes the transmission queues and reduces the packet drop. Lastly, the distribution and average plots reveal that latency has a narrower and more focused distribution near zero. This suggests that latency may have lesser significance in predicting traffic drops compared to other features.
Furthermore, the correlation heatmap of Fig. \ref{heatmap} indicates the strength and direction of the linear relationship between the predicted traffic drops and each input feature, namely PRB, channel quality, and latency with a correlation of 81 $\%$,  81 $\%$ and  69 $\%$ respectively., which corroborates the previous analysis. Note that to plot correlation matrix heatmap, we consider one matrix, $\mathbf{R}_{k,n}$ = [$\mathbf{a}_{k,n},\hat{\mathbf{y}}_{k,n}$], where, $\mathbf{a}_{k,n}$ denotes the attribution score of features with dimensions $D_{k,n} \times Q$ and $\hat{\mathbf{y}}_{k,n}$ is the predicted output variable with dimensions $D_{k,n} \times 1$.
A correlation of 81 $\%$ between the predicted traffic drops and PRB as well as channel quality suggests a strong positive relationship. This means that higher values of PRB and channel quality are associated with a higher likelihood of lower traffic drops in the uRLLC slice.
On the other hand, the correlation coefficient of 69 $\%$ between the predicted traffic drops and latency indicates a moderate positive relationship. This means that higher values of latency are generally associated with a higher likelihood of traffic drops in the uRLLC slice.
Based on these observations, it can be inferred that when dealing with real-time traffic in the telecom industry, maintaining an appropriate allocation of PRB and monitoring channel quality in the network slice becomes crucial for ensuring better services to users.

\end{itemize}

\begin{figure}[htb]
\centering
     \includegraphics[width=0.45\textwidth]{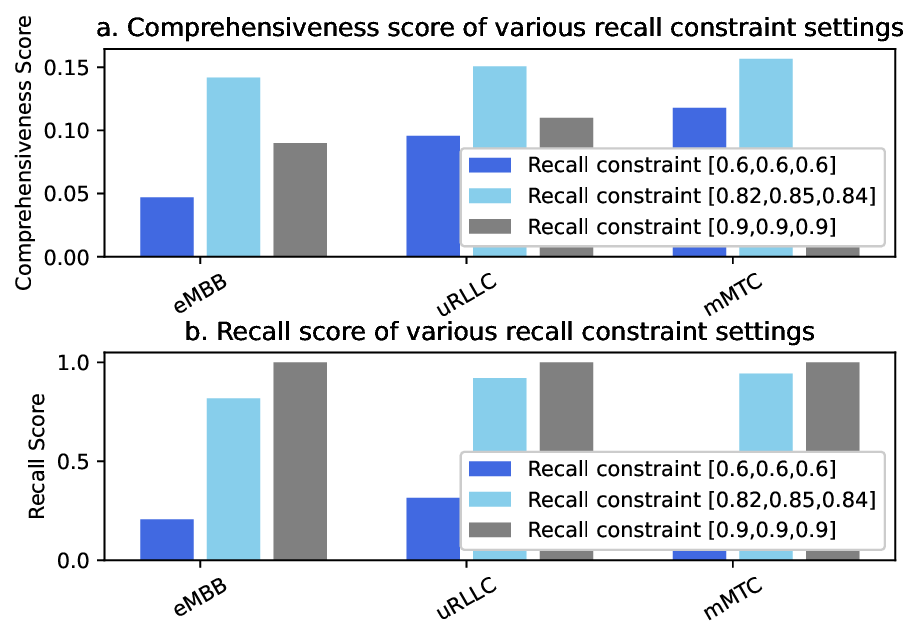}
\caption{Approach to set the lower bound of recall score of the proposed EGFL-JS}
\label{recall_set}
\end{figure}

Noted that, in our experiments, we set the lower bound recall scores for eMBB, uRLLC, and mMTC at [0.82, 0.85, 0.84], respectively, during training, aiming for values exceeding 80 $\%$. These thresholds represent a high requirement for recall that our solution successfully achieved. Our approach aligns with previous work, as evidenced by our previous study \cite{mine2}.
In real-world deployment scenarios, the recall threshold serves as a tunable parameter that can be adjusted by the operator or tenant based on the desired fairness level. To evaluate the effectiveness of these recall scores, we conducted supporting simulations, illustrating our system's performance across various recall constraints for all slices within the EGFL-JS framework. Results of fig. \ref{recall_set} show consistently higher comprehensive scores and higher recall score for all slices at the specified recall thresholds compared to alternative settings. 
Moreover, we can observe that setting recall thresholds too low risks overlooking crucial instances, leading to decreased recall and comprehensiveness scores. Conversely, overly stringent constraints may boost recall but at the expense of comprehensiveness, potentially compromising the model's generalization capabilities.
This analysis underscores the importance of selecting appropriate recall constraints to achieve desired performance outcomes.

\section{Navigating EGFL: Methodological Insights and XAI Integration Significance}

In this section, we comprehensively compare our proposed EGFL methodology and const. FL (Vanilla FL) from diverse perspectives. The aim is to underscore the importance and necessity of integrating XAI explanations within the closed-loop FL training framework. \\
As presented before,in Fig.\ref{loss_eMBB},\ref{loss_b},\ref{loss_mmtc}, EGFL exhibits faster convergence for all slices compared to const. FL (vanilla FL), highlighting its adaptability to data dynamics. Moving to Fig. \ref{recall}, recall scores for EGFL and const. FL are nearly identical across all slices, indicating consistent recall performance in EGFL despite accelerated convergence. Additionally, Fig. \ref{loss} reveals a lower training loss in EGFL, signifying improved optimization and model fit. In Fig. \ref{comp}, EGFL attains a higher comprehensiveness score for all slices compared to vanilla FL, emphasizing the benefits of XAI integration. \\
More clearly, the success of proposed EGFL is rooted in its unique design, integrating iterative local learning with runtime explanation in a closed-loop framework. A key innovation is the introduction of a novel federated EGL loss function, incorporating Jensen Shannon Divergence (JSD) to enhance model reliability. Computed using both predicted and masked predicted values generated through XAI-based feature masking (detailed in Section V), JSD incentivizes the model to generate samples resembling a target distribution. This nuanced strategy, absent in the baseline const. FL, emphasizes the synergy between model predictions and XAI-based feature masking, highlighting EGFL's distinctive contributions in the closed-looped federated learning landscape.

\section{Conclusion}

In this paper, we have proposed a novel fair explanation guided federated learning (EFGL) approach to fulfill transparent and trustworthy zero-touch service management of 6G network slices at RAN in a non-IID setup. We have proposed a modified objective function where we consider Jenson-Shannon divergence along with the cross-entropy loss and a fairness metric, namely the recall, as a constraint. The underlying optimization task has been solved using a proxy-Lagrangian two-player game strategy. The simulated results have validated that the proposed scheme can give us a new direction for implementing trustworthy and fair AI-based predictions.
\section{Acknowledgment}
This work has been supported in part by the projects 6G-BRICKS (101096954) HORIZON-JU-SNS-2022 and the 5GMediaHUB project (101016714) H2020-EU.2.1.1.
\bibliographystyle{IEEEtran}
\bibliography{myBibliographyFile}

\balance

\end{document}